\begin{document}

\title{A Second Order Model of Capacity Drop at Expressway Lane-Drop Bottlenecks}

\subtitle{}


\author{Fuminori Hattori         \and
        Kentaro Wada 
}


\institute{Fuminori Hattori \at
Graduate School of Science and Technology
Degree Programs in Systems and Information Engineering, University of Tsukuba, 
Tsukuba, Ibaraki, 305-8753, Japan\\
              \email{s202520481@u.tsukuba.ac.jp}           
           \and
           Kentaro Wada \at
           Institute of System and Information Engineering, University of Tsukuba, 
           Tsukuba, Ibaraki, 305-8753, Japan.\\
	\email{wadaken@sk.tsukuba.ac.jp} (Corresponding Author)
}

\date{Received: date / Accepted: date}

\maketitle

\begin{abstract}
This paper presents a second-order model of capacity drop at expressway lane-drop bottlenecks.
The model is an extension of Jin's model \cite{Jin2017}. 
This model captures not only the stationary state associated with the capacity drop but also the transitional dynamics leading from the onset of congestion to that state.
The characteristics of the proposed model are examined theoretically and numerically.
The results show that the capacity drop stationary state is stable and is reached immediately once congestion occurs.
Furthermore, we validate the model using empirical data. 
 {The results suggest that the model has the potential to provide new insights into congestion phenomena at expressway lane-drop bottlenecks.}
\keywords{Capacity drop \and Lane-drop bottlenecks \and Second order model \and Calibration \and Validation}
\end{abstract}

\section{Introduction}
\label{intro}
On expressways, it has been observed that the discharge flow rate during congestion is approximately 10 \% lower than that before the onset of congestion ~\cite{Chung,Banks,Cassidy1999}. 
This phenomenon is referred to as the capacity drop. 
Since the decrease in discharge flow rate prolongs the duration of congestion and causes additional delays for drivers, it has motivated the development of various traffic control strategies.

There are two major hypotheses regarding the causes of the capacity drop phenomenon, both based on driving behavior. 
One attributes it to disturbances in traffic flow caused by lane-changing maneuvers~\cite{Cassidy2005}, and the other to sluggish acceleration behavior of drivers when exiting bottlenecks~\cite{Hall}. 
Based on the former hypothesis, Leclercq et al.~\cite{Leclercq} proposed a model that endogenously reproduces the capacity drop phenomenon at merging sections. 
Their model is a hybrid of microscopic and macroscopic approaches, and incorporates assumptions from two models. One is the Newell–Daganzo merge model~\cite{Daganzo1995}, which allocates downstream capacity between two merging streams. The other is Laval's model~\cite{Laval}, which assumes that merging vehicles create voids in the traffic stream, thereby reducing the flow. 
The validity of the model has been confirmed by comparison with empirical data. 
However, Cassidy and Rudjanakanoknad~\cite{Cassidy2005} argued that ``lane changing alone might not explain the capacity drop.''

The latter hypothesis suggests that the decrease in flow is caused by low acceleration of vehicles starting from queues. 
A macroscopic model that endogenously captures the capacity drop based on this idea was proposed by Jin~\cite{Jin2017}. 
This is a first-order model that considers lane-drop bottlenecks, such as shown in Fig.~\ref{l}, which appears in a merging area, a lane-drop area, a work zone, or an accident zone.
This model is based on the following two assumptions: (i) spatially inhomogeneous fundamental diagrams (FDs), and (ii) bounded acceleration (BA) of vehicles. 
However, this model assumes that the capacity drop stationary state occurs instantaneously after the onset of congestion, and therefore fails to capture the transition process from the congestion onset to the stationary state. 
In addition, its results have not been validated against sufficient empirical data.

In this study, we extend the model proposed by Jin~\cite{Jin2017} to a second-order model of capacity drop at a lane-drop bottleneck.
The extended model enables us to analyze the theoretical characteristics of the transition process from the onset to the stationary state of the congestion associated with the capacity drop phenomenon (``capacity drop stationary state"). 
Furthermore, we calibrate the model using empirical data and validate whether it can consistently explain real-world traffic phenomena.

The theoretical extension and analysis in this study follow a similar framework to that used by Wada et al.~\cite{Wada}, which extended Jin~\cite{Jin2018}'s model of sag and tunnel bottlenecks. 
However, the applicability of such an extension to lane-drop sections is not obvious. 
Moreover, clarifying the common and distinct characteristics between lane-drop and sag/tunnel sections through this theoretical framework is valuable.

 {
The scope of the present model covers not only lane-drop sections on the mainline but also those at on-ramp merging areas. 
While the two situations are not strictly the same, for example in terms of microscopic vehicle maneuvers, at a coarse scale they are bottlenecks primarily caused by the reduction in the number of lanes. 
It is therefore reasonable to assume that the macroscopic model proposed in this study can describe the capacity drop phenomenon in both types through a common mechanism (see also, for similar treatments, Jin \cite{Jin2017}; Laval and Daganzo \cite{Laval}; Laval et al.\cite{Laval2007}).
}

\begin{figure}[tb]
\centering
\includegraphics[width=0.48\textwidth,clip]{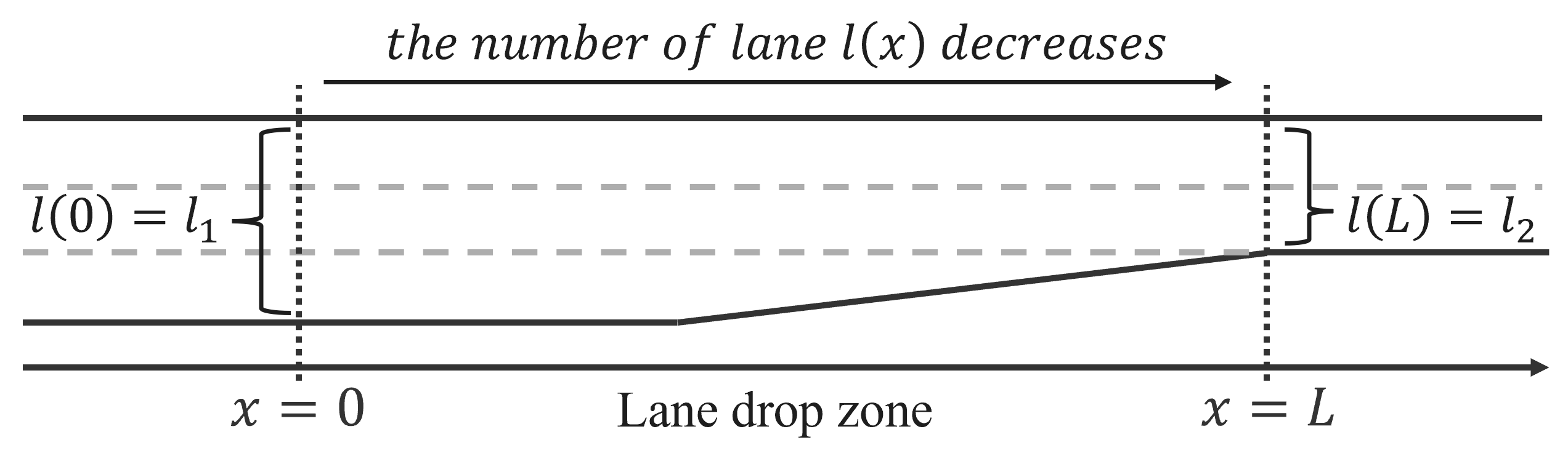}
\caption{Lane drop zone}\label{l}
\end{figure}

The structure of this paper is as follows. 
In Section~\ref{2}, we present the extension of Jin's~\cite{Jin2017} model to a second-order model. 
Simulation results are compared with those of Jin's original model under stationary conditions.
In Section~\ref{sec:Theory}, we conduct theoretical analysis and simulations to demonstrate the characteristics of the capacity drop.
In Section~\ref{sec:empirical}, we validate the extended model using real-world data (Zen Traffic Data~\cite{Zen}).
In Section~\ref{sec:conclusion}, we summarize this study and discuss directions for future work.

\section{Model}
\label{2}

\subsection{Formulation}

\begin{figure}[tb]
\centering
\includegraphics[width=0.3\textwidth,clip]{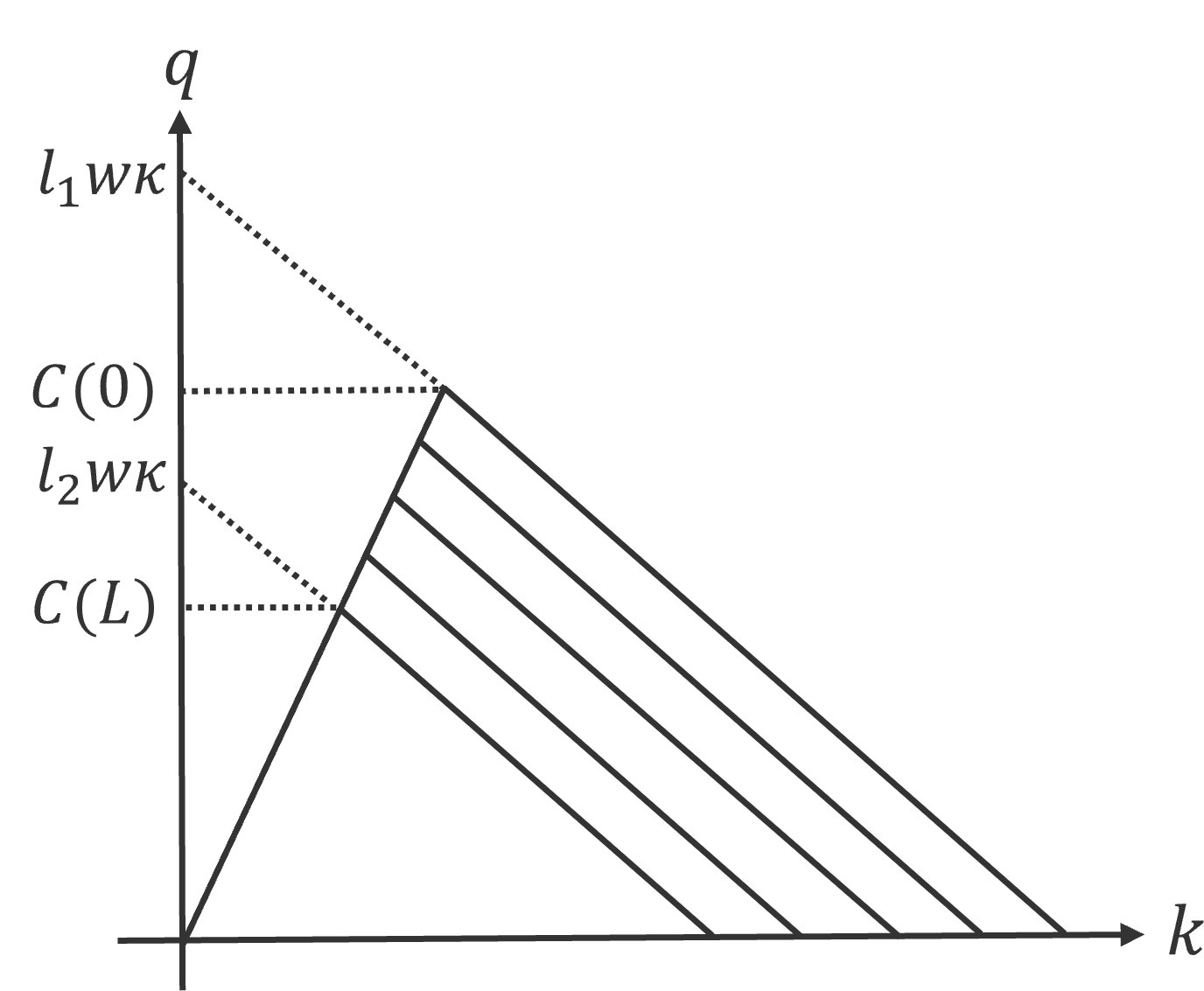}
\caption{Flow-density fundamental diagram}\label{FD}
\end{figure}

In this section, we present a second-order model for a lane-drop bottleneck based on the BA-LWR model, as illustrated in Fig.\ref{l}. 
This model extends the first-order model by Jin~\cite{Jin2017}, and is as follows: 
\begin{subequations}\label{BALWR}
\begin{align}
& k_t + (kv)_x = 0 \label{eq:conservation}, \\
& v_t + v v_x = \min \left\{ A(v,x), \frac{V(k,x) - v}{\epsilon} \right\}\label{eq:velocity},
\end{align}
\end{subequations}
where \( k \) and \( v \) denote traffic density and speed, respectively. \( V(k, x) \) is the speed-density FD, and the traffic flow rate \( q \) is given by \( kv \). The parameter \( \epsilon = \lim_{\Delta t \to 0^+} \Delta t \) is a hyperreal infinitesimal number (see Jin~\cite{Jin2019} for details), and subscripted variables denote partial derivatives. Eq.~\eqref{BALWR} is equivalent to the two-phase model proposed by Lebacque~\cite{Laba}.

This model is based on the following two assumptions.
The first assumption is inhomogeneous fundamental diagram~(FD) that depends on the number of lanes $l(x)$ at each location $x$:
\begin{align}
q = \min \left \{ uk, w\left(l(x)\kappa - k\right) \right \}.\label{sec:FD}
\end{align}
The capacity also varies with location:
\begin{align}
C(x) = \frac{uw}{u+w} l(x) \kappa, 
\label{Cx}
\end{align}
where, \( u \), \( w \), \( \kappa \), and \( l(x)\kappa \) represent the free-flow speed, backward wave speed, jam density of single lane, and jam density of all lanes, respectively. 
More specifically, we assume that \( l(x) \) decreases linearly in the lane-drop bottleneck section, as shown in Fig.~\ref{l},
 {\begin{align}
l(x) = \max\left\{l_{2}, \min\left\{l_{1}, l_{1}-\frac{l_{1} - l_{2}}{L}x\right\} \right\}. \label{lx}
\end{align}}
Based on this assumption, the FD continuously shrinks within the bottleneck section, and the capacity \( C(x) \) decreases continuously from upstream to downstream, as illustrated in Fig.~\ref{FD}.
 {In addition, we can account for the reduction in the effective number of lanes caused by systematic lane changing, as in Jin~\cite{Jin2010}: if the lane-changing intensity is $\eta(x)$ at $x$, the effective number of lanes is given by $\frac{l(x)}{1 + \eta(x)}$.}
In this section, however, we simplify the analysis by assuming \( \eta(x) = 0 \).

The second assumption is that vehicle acceleration is bounded (BA: Bounded Acceleration). This  addreses the issue in the KW theory that allows for infinite acceleration. The bounded acceleration function satisfies: (i) non-negativity: \( A(v, x) \ge 0 \), (ii) boundedness: \( A(v, x) \le a_0 \), and (iii) non-increasing with respect to speed: \( \frac{\mathrm{d}A(v, x)}{\mathrm{d}v} \le 0 \). In Sections \ref{2} and \ref{sec:Theory}, we adopt the constant acceleration model for comparison with Jin's model~\cite{Jin2017}:
\begin{align}
A(v, x) = a_0 - g\phi(x).
\label{AC}
\end{align}
For the validation with empirical data in Section \ref{sec:empirical}, we adopt the TWOPAS model:
\begin{align}
A(v, x) = (a_0 - g\phi(x))\left(1 - \frac{v}{u} \right),
\label{AT}
\end{align}
where \( a_0 \) is the maximum acceleration, \( g = 9.8~\mathrm{m/s^2} \) is the gravitational acceleration, and \( \phi(x) \) denotes the decimal gradient.

\subsection{Continuum Model in Lagrangian Coordinate}
\label{continuum}

\begin{figure}[tb]
\centering
\includegraphics[width=0.48\textwidth,clip]{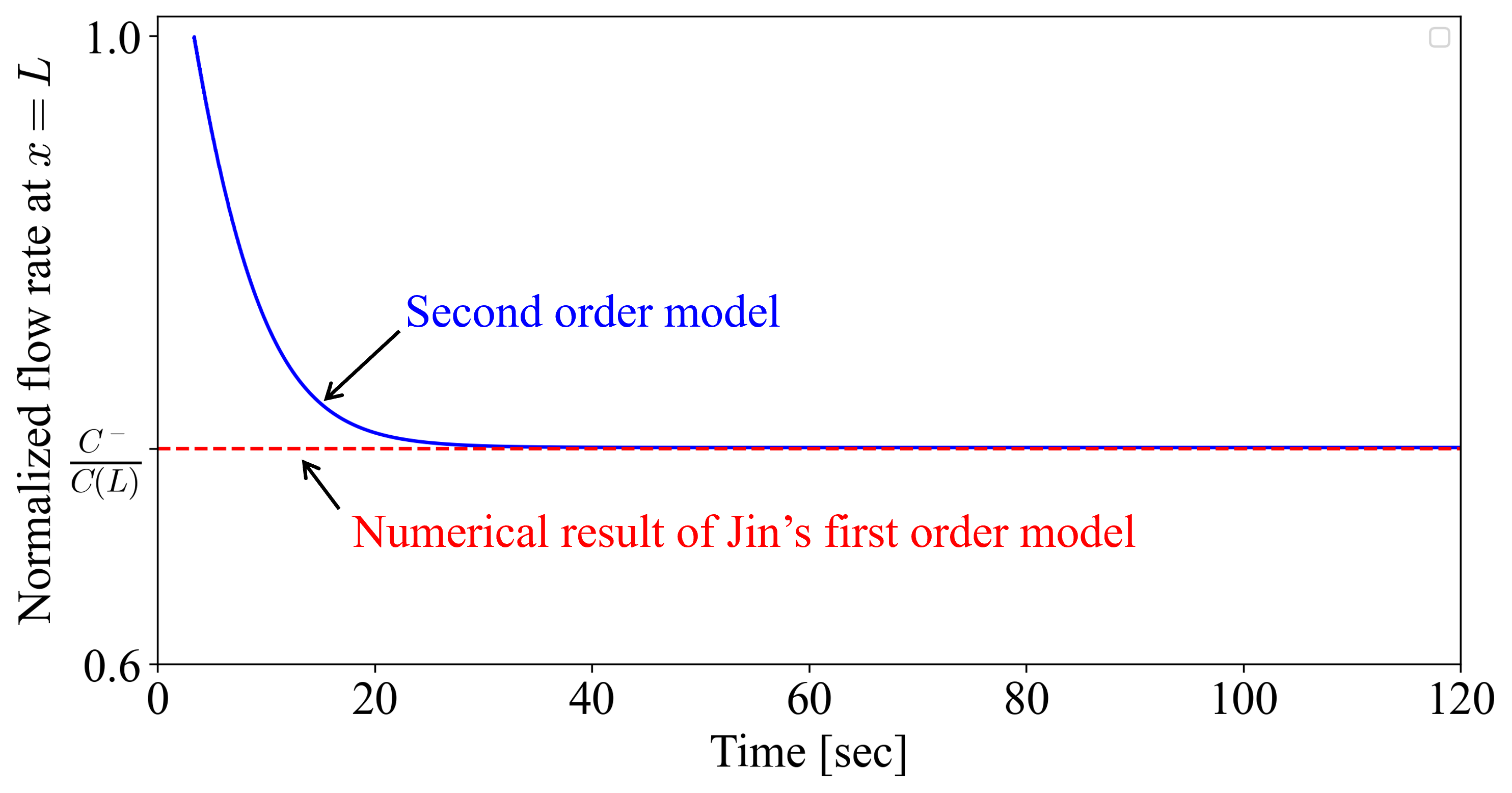}
\caption{Evolution of queue discharge flow rate at $x=L$ under the second order model}
\label{kaiseki}
\end{figure}

\begin{figure}[tb]
\centering
\includegraphics[width=0.48\textwidth,clip]{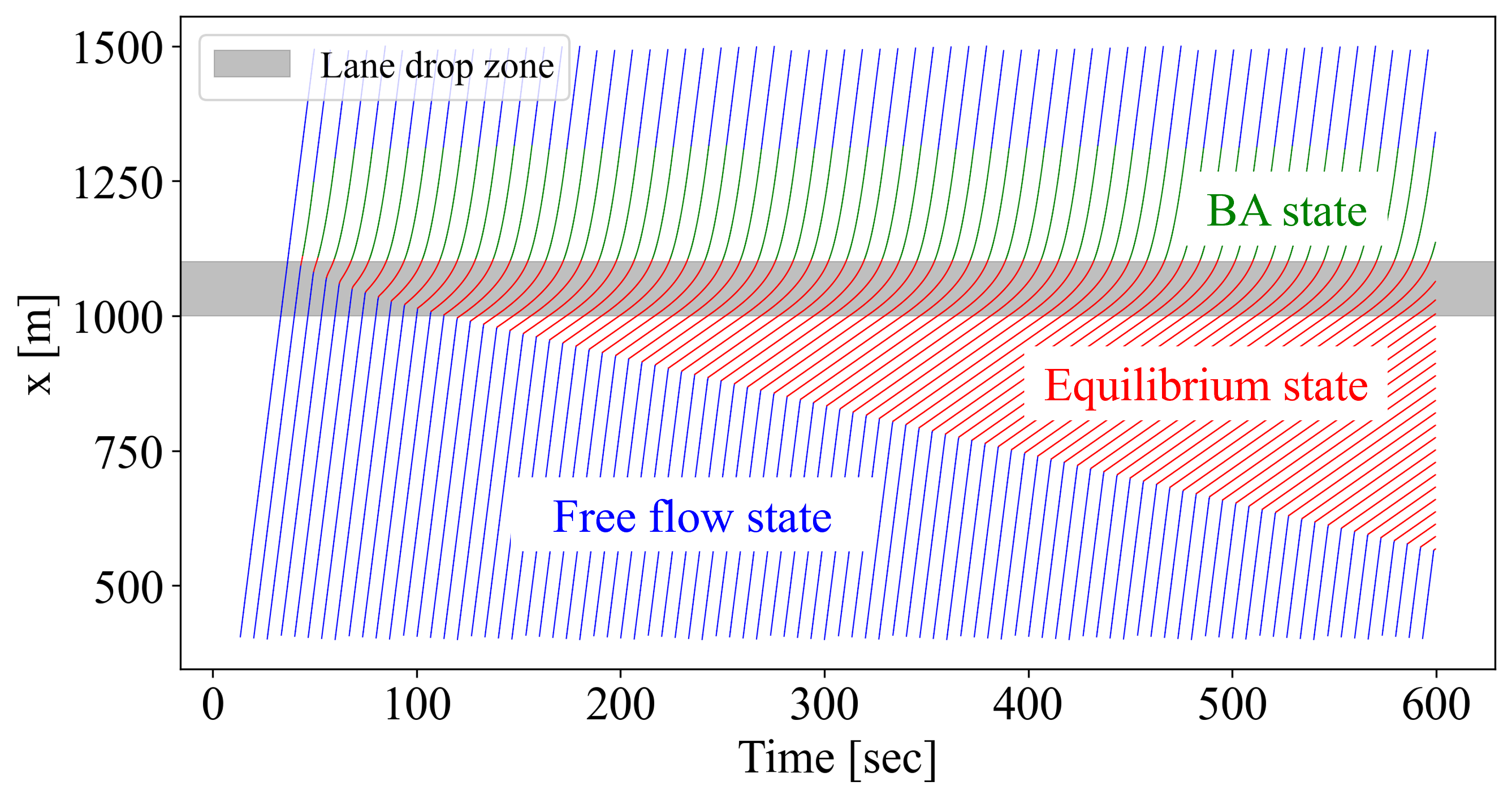}
\caption{The location of the $n$th vehicle at time $t$ (where color indicates traffic states)}
\label{sim}
\end{figure}

\begin{figure}[tb]
\centering
\includegraphics[width=0.48\textwidth,clip]{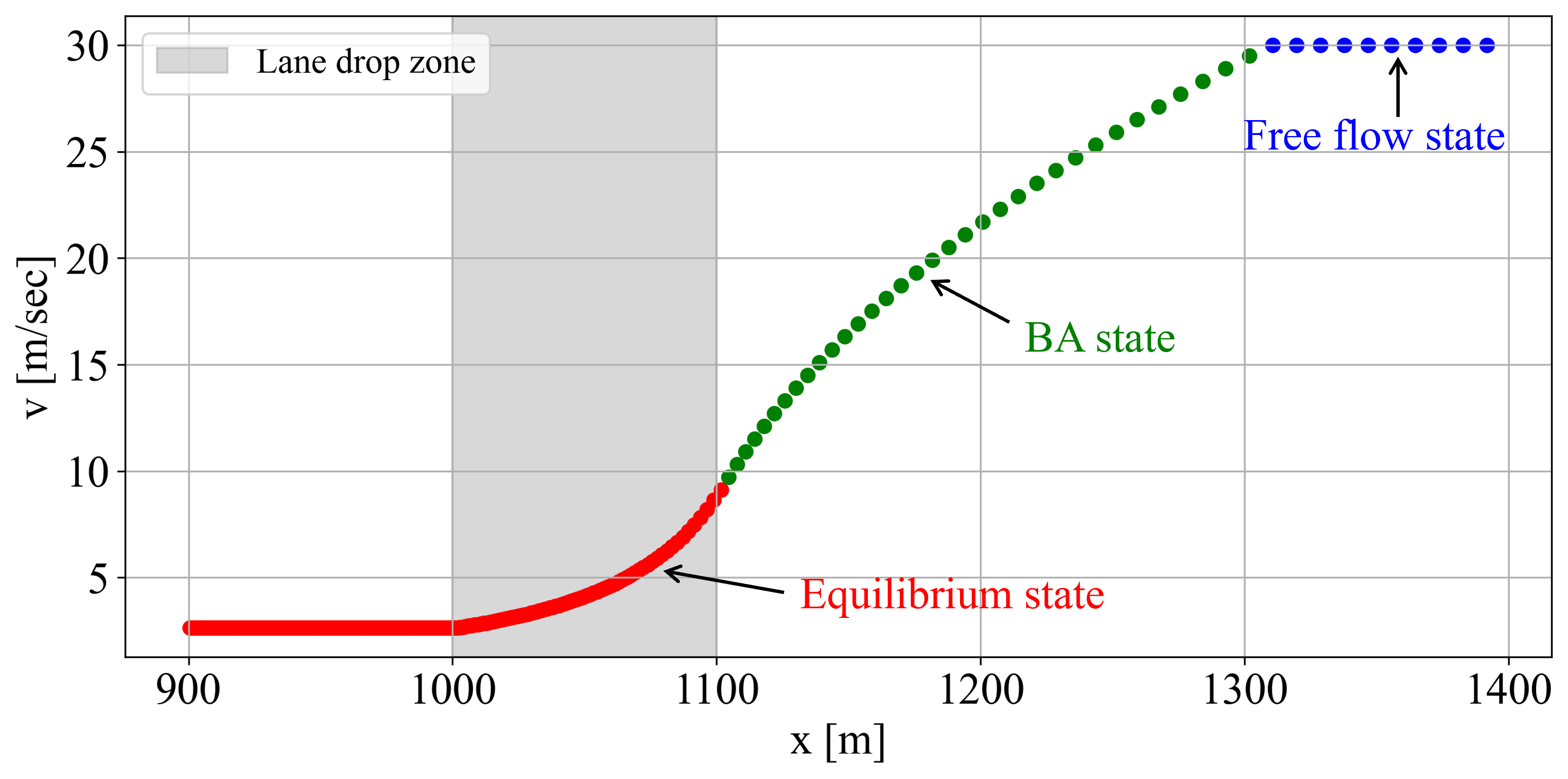}
\caption{Speed recovery profile in the stationary state of capacity drop (where color indicates traffic states)}
\label{spdp}
\end{figure}

In this section, we first transform Eq.~\eqref{BALWR} into the Lagrangian coordinate system using the method developed by Jin~\cite{Jin2016}. 
This transformation enables efficient numerical computation via simple simulation procedures. 
Using this simulation, we verify whether the transition process of the capacity drop phenomenon is well captured and whether the stationary state results are consistent with those in Jin~\cite{Jin2017}.

This transformation consists of the following two steps:  
(i) transformation of state variables, and  
(ii) finite difference approximation of derivatives.
In step (i), vehicle speed \( v(t, n) \) and spacing \( s(t, n) \) are expressed in terms of traffic flow \( q(t, x) \) and density \( k(t, x) \) as follows:
\begin{align}
v(t, n) &\equiv \frac{\partial X(t, n)}{\partial t} = \frac{q(t, x)}{k(t, x)} \label{vtn}, \\
s(t, n) &\equiv -\frac{\partial X(t, n)}{\partial n} = \frac{1}{k(t, x)} \label{stn},
\end{align}
where \( n \) represents the cumulative vehicle number, and the continuous function \( X(t,n) \) indicates the location of the \( n \)th vehicle at time \( t \).
In step (ii), differential operations are approximated using finite differences with time step \( \Delta t \) and vehicle index step \( \Delta n \):
\begin{align}
v(t, n) &\approx \frac{X(t, n)-X(t - \Delta t, n)}{\Delta t} \label{v}, \\
s(t, n) &\approx \frac{X(t, n - \Delta n)-X(t, n)}{\Delta n} \label{s}, \\
a(t, n) &\approx \frac{v(t + \Delta t, n)-v(t, n)}{\Delta t} \label{a}.
\end{align}

By using the above transformations and substituting into Eq.~\eqref{BALWR}, we obtain the following continuum model in Lagrangian coordinates  {(see Appendix \ref{sec:derivation} for the details of the derivation)}:
\begin{subequations}\label{overallCF}
\begin{align}
\begin{split}
X(t + \Delta t, n) &= X(t, n) + \\
&\quad \min \left\{ V(s, x),\ v(t, n) + A(v, x)\, \Delta t \right\} \Delta t,
\end{split} \label{CF} \\
V(s, x) &= \min \left\{ u,\ \frac{s(t, n) - d(X(t, n))}{\tau(X(t, n))} \right\}, \label{CF2} \\
s(t, n) &\equiv \frac{X(t, n - \Delta n) - X(t, n)}{\Delta n}, \label{CF3}
\end{align}
\end{subequations}
where \( d(x) = \frac{1}{l(x)\kappa} \) and \( \tau(x) = \frac{1}{l(x)w\kappa} \). 
Although Eq.~\eqref{overallCF} is usually interpreted as Newell's simplified car-following model~\cite{Newell}, this study treats it strictly as a continuum model. 
This is because the variables in Eq.~\eqref{overallCF} cannot be interpreted directly as vehicle-level variables due to the multi-lane nature of the model. 
In other words, \( d(X(t, n)) \) and \( \tau(X(t, n)) \) serve as variables for simulation and should not be interpreted as minimum spacing or safe time gap as in Wada et al.~\cite{Wada}.

We proceed to simulate the extended model and examine whether it properly describes the transition process and whether the steady-state values agree with those computed by Jin~\cite{Jin2017}. 
For comparison purposes, we use the same parameters as in Section 7 of Jin~\cite{Jin2017}: lane-drop section length \( L = 100~\mathrm{m} \), upstream number of lanes \( l_1 = 2 \), downstream number of lanes \( l_2 = 1 \), free-flow speed \( u = 30~\mathrm{m/s} \), backward wave speed \( w = 5~\mathrm{m/s} \), jam density \( \kappa = 1/7~\mathrm{veh/m} \), maximum acceleration \( a_0 = 2~\mathrm{m/s^2} \). The time step and vehicle index step are set \( \Delta t = 0.006~\mathrm{s} \) and \( \Delta n = 0.01~\mathrm{veh} \), respectively. 
Let \( C^- \) denote the queue discharge flow (QDF) rate. 
Then, the capacity drop ratio can be defined as \( C^-/C(L) \).

The simulation results are presented in Fig.~\ref{kaiseki}. 
The blue line shows the time evolution of the normalized flow rate at the downstream end of the bottleneck \( x = L \), where the flow is normalized by the downstream capacity \( C(L) \). 
From this, we can confirm that the extended theory successfully captures the transition process of the capacity drop phenomenon and that the stationary state value is consistent with that of Jin~\cite{Jin2017}.

Fig.~\ref{sim} shows the traffic states over time \( t \) and space \( x \), and Fig.~\ref{spdp} illustrates the speed profile as a function of position under the capacity drop stationary state. 
In these figures, the blue line represents free-flow conditions, the red line indicates the equilibrium congested state of the inhomogeneous LWR model, the green line denotes acceleration states under the bounded acceleration constraint, and the gray area marks the bottleneck section.
From these results, we observe the following:  
(i) the model exhibits a transition from equilibrium congested state to BA state at the downstream end of the bottleneck (Fig.~\ref{sim});  
(ii) under the capacity drop stationary state, the speed profile exhibits a convex pattern within the bottleneck and a concave pattern downstream of the bottleneck (Fig.~\ref{spdp}).

\section{Theoretical Analysis}\label{sec:Theory}

\subsection{Reduced Model}\label{sec:RM}

\begin{figure}[tb]
\centering
\includegraphics[width=0.48\textwidth,clip]{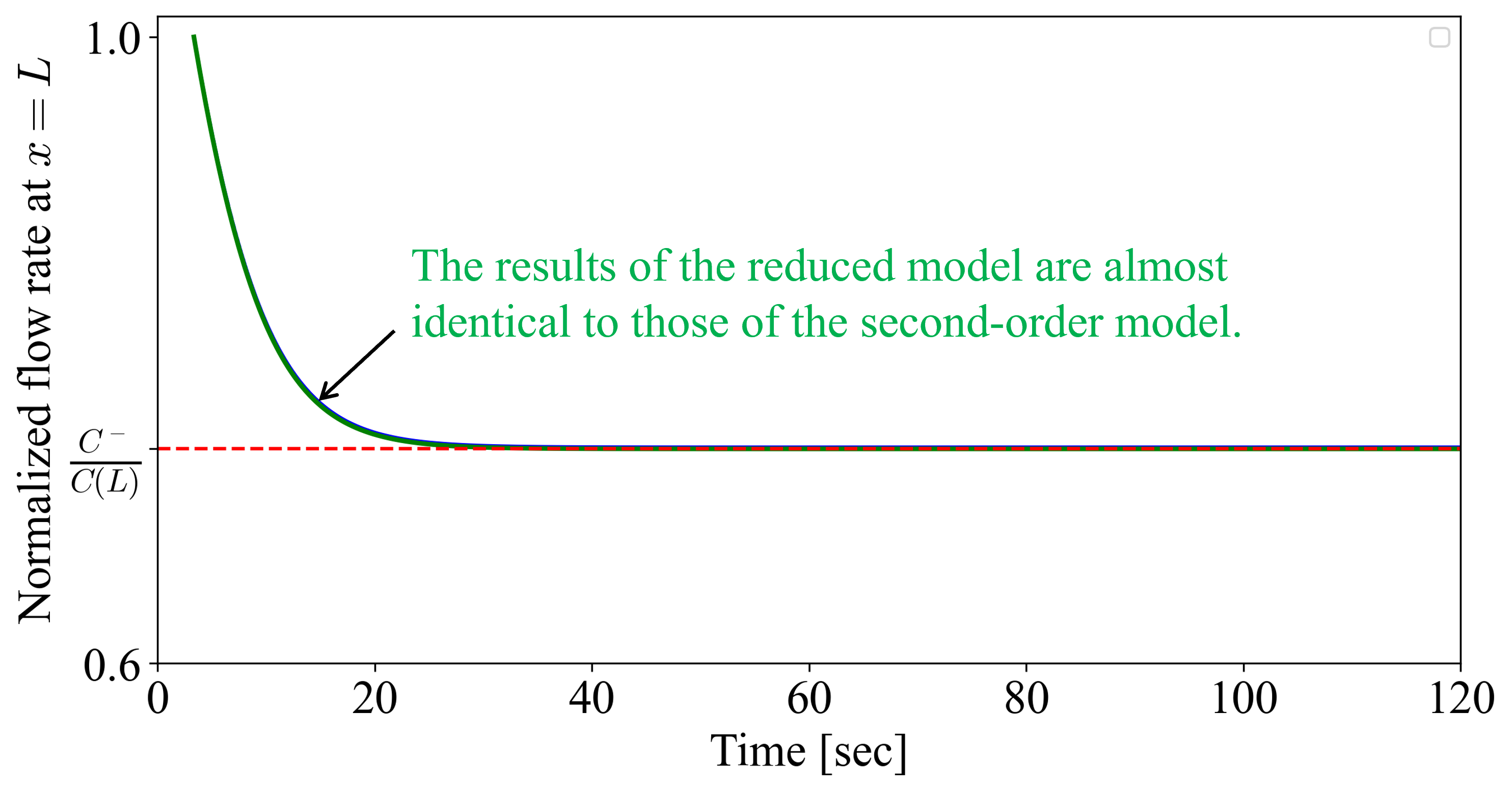}
\caption{Evolution of queue discharge flow rate at $x=L$ (blue line: continuum model in Lagrangian coordinate; green line: reduced model)}
\label{kaisekiIFS}
\end{figure}

In Section~\ref{2}, we demonstrated that the extended model can be easily computed through simulation. However, this approach has some drawbacks: it is difficult to analytically examine the theoretical characteristics, and numerical errors can occur. 
To address these limitations, in this section, we present a one-dimensional, spatially reduced model of capacity drop for lane-drop bottlenecks. 
This is formulated as a mathematical model known as an iterated function system.
This model built on the relationship between two types of traffic states across the downstream end of the bottleneck, as identified in the previous section.

More specifically, this model is constructed in the following two steps:
first, the equilibrium speed  {\( v_{n+\Delta n} \)} at the downstream end of the bottleneck \( x = L \) is computed from the BA speed \( \tilde{v}_{n} \) at $x = L + d(L)\Delta n$:
 {
\begin{align}
v_{n + \Delta n} = \left(\tau_{x}(L)\Delta n + \frac{1 + d_{x}(L)\Delta n}{\tilde{v}_{n}}\right)^{-1}
\label {EQBA}
\end{align}}
This is derived from the relationship of the equilibrium congested state (see Wada et al.~\cite{Wada} for details).

Next, the BA vehicle speed \( \tilde{v}_{n} \) at $x = L + d(L)\Delta n$ is computed from the equilibrium speed \( v_{n} \) at \( x = L \) based on the relationship between speed and acceleration. This relationship can be formulated as:
\begin{equation}
\begin{aligned}
\tilde{v}_n = G(v_n) &\equiv 
\begin{cases} 
    \sqrt{v_n^2 + 2a_0 d(L) \Delta n} & \text{if } v_n < \bar{v}, \\ 
    u & \text{if } v_n \ge \bar{v},
\end{cases}
\\
\quad &\text{where } \bar{v} \equiv \sqrt{u^2 - 2a_0 d(L) \Delta n}.
\end{aligned}
\label{BAEQ}
\end{equation}

Combining the above two relationships, the reduced model is defined as a composite function of Eqs.~\eqref{EQBA} and~\eqref{BAEQ}, yielding:
 {
\begin{align}
& v_{n + \Delta n} = \left(\alpha\Delta n + \frac{1 + \gamma\Delta n}{\sqrt{v_{n}^{2} + \beta \Delta n}}\right)^{-1}, 
\quad n=1,2,3,\ldots \label{IFS}\\
&\text{where} \quad \alpha = \tau_{x}(L) 
= \frac{l_{1} - l_{2}}{Ll_{2}}\tau(L), \ \beta = 2a_0d(L),  \notag\\
& \mspace{65mu}\gamma = d_{x}(L) 
= \frac{l_{1} - l_{2}}{Ll_{2}}d(L). \notag
\end{align}}
This model allows for the iterative computation of vehicle speed for any vehicle, given the initial speed at the downstream end of the bottleneck section \( x=L \). 
The traffic flow rate can also be computed from the fundamental diagram as \( q(t,L) = \frac{v_n}{s(t,n)} \).
 {
It can also incorporate lane-changing intensity by replacing $l(x)$ with the effective number of lanes $\frac{l(x)}{1 + \eta(x)}$.
More specifically, one can replace \( l(x) \) in \( d(x) = \frac{1}{l(x)\kappa}, \ \tau(x) = \frac{1}{l(x)w\kappa} \) with \( \frac{l(x)}{1 + \eta(x)} \), and set \( \alpha \) and \( \gamma \) accordingly.
}

The green line in Fig.~\ref{kaisekiIFS} shows the result obtained using this model. The parameter settings are the same as those in Section~\ref{continuum}. From the figure, we confirm that the results of the continuum model in Lagrangian coordinates and the reduced model are in agreement.

\subsection{Stability of the capacity drop phenomenon}\label{sec:FP}
The second order model~\eqref{BALWR} captures an essential characteristic of the capacity drop phenomenon, which is summarized in the form of the following theorem. 

\begin{theorem}
The mapping \( f: [0, \bar{v}] \to [0, \bar{v}] \), defined by \eqref{IFS}, has a unique fixed point and its fixed point is globally stable.
\end{theorem}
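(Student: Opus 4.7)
The plan is to treat the one-dimensional iteration $v_{n+\Delta n} = f(v_n)$ as a scalar monotone dynamical system and exploit order-theoretic arguments rather than attempt a direct contraction estimate. Writing $f(v) = 1/h(v)$ with $h(v) := \alpha\Delta n + (1+\gamma\Delta n)(v^2+\beta\Delta n)^{-1/2}$, the first observation is that $h$ is continuous, strictly positive, and strictly decreasing in $v \geq 0$ (since $\alpha,\beta,\gamma > 0$ whenever $l_1 > l_2$), so $f$ is continuous and strictly increasing on $[0,\bar v]$.

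For existence and uniqueness of a fixed point, I would rewrite $f(v) = v$ as $\phi(v) := v\,h(v) - 1 = 0$. Both summands $\alpha v\Delta n$ and $(1+\gamma\Delta n)\,v/\sqrt{v^2+\beta\Delta n}$ are strictly increasing in $v$ (the second because the derivative of $v/\sqrt{v^2+c}$ equals $c/(v^2+c)^{3/2} > 0$ for $c>0$), so $\phi$ is strictly increasing with $\phi(0) = -1 < 0$, and uniqueness of a zero is immediate. Existence (and simultaneously the forward-invariance $f([0,\bar v])\subseteq[0,\bar v]$, since $f$ being increasing reduces this to $f(\bar v)\leq \bar v \Leftrightarrow \phi(\bar v)\geq 0$) then reduces to verifying $\phi(\bar v)\geq 0$. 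Using the identity $\sqrt{\bar v^{2}+\beta\Delta n}=u$ implicit in the definition of $\bar v$, this inequality becomes $\alpha\bar v\Delta n+(1+\gamma\Delta n)\bar v/u\geq 1$, whose leading-order expansion in $\Delta n$ yields the parameter condition $u\alpha+\gamma \geq \beta/(2u^2)$; substituting the definitions of $\alpha,\beta,\gamma$ rewrites this as $(l_1-l_2)(u+w)u^2 \geq L l_2 w a_0$, which is satisfied with a very wide margin by the parameters used in the paper.

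For global stability, I would use a cobweb argument. Strict monotonicity of $\phi$ together with $\phi(v^*)=0$ gives $f(v) > v$ on $[0,v^*)$ and $f(v) < v$ on $(v^*,\bar v]$. Combined with $f$ being monotone increasing, this implies that for any $v_0 \in [0,v^*)$ an induction using $v_n < v^* \Rightarrow f(v_n) < f(v^*) = v^*$ yields $v_n < v_{n+\Delta n} < v^*$ for all $n$, so $\{v_n\}$ is monotone and bounded and therefore converges. Continuity of $f$ forces the limit to be a fixed point, which by uniqueness must be $v^*$. The symmetric case $v_0 \in (v^*,\bar v]$ is identical with inequalities reversed, and the case $v_0 = v^*$ is trivial, giving global stability on all of $[0,\bar v]$.

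I expect the only nontrivial step to be the domain-invariance check $\phi(\bar v)\geq 0$, since it is the one place where concrete parameter values enter the argument (the monotonicity and cobweb steps are purely qualitative). If the parameter inequality failed in some exotic regime, the same argument would still apply on the larger interval $[0,u]$, which is always forward-invariant because $h(v) > 1/u$ yields $f(v) < u$; the structure of the proof would be unchanged.
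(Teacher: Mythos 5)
Your proof is correct, but it follows a genuinely different route from the paper's. The paper proves that $f$ is a contraction: it differentiates, applies the mean value theorem, bounds $|f'(v_c)|<1$ using $\alpha,\beta,\gamma>0$, and invokes the Banach fixed-point theorem on the complete space $[0,\bar{v}]$, obtaining existence, uniqueness, and global stability in one stroke. You instead use order structure: strict monotonicity of $\phi(v)=v\,h(v)-1$ for uniqueness, the intermediate value theorem plus the invariance check $\phi(\bar{v})\ge 0$ for existence, and a cobweb/monotone-convergence argument for global stability. Comparing the two: the contraction route yields a quantitative (geometric) convergence rate, which your route does not, and it requires no parameter condition; on the other hand, the Banach argument needs the self-map property $f([0,\bar{v}])\subseteq[0,\bar{v}]$, which the paper simply asserts in the theorem statement and never verifies, whereas you make it explicit and reduce it to the condition $(l_1-l_2)(u+w)u^2\ge L\,l_2\,w\,a_0$ (at leading order in $\Delta n$), amply satisfied by the paper's parameters ($31500$ versus $1000$). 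Your monotone argument also mirrors the actual dynamics --- the speeds relax monotonically toward $v^*$ from either side --- which the contraction proof does not exhibit. One small caution: your fallback to $[0,u]$ should be phrased via the capped branch $\tilde{v}_n=u$ of Eq.~\eqref{BAEQ}, under which $f$ is constant on $[\bar{v},u]$; the closed form $h(v)=\alpha\Delta n+(1+\gamma\Delta n)(v^2+\beta\Delta n)^{-1/2}$ is the $v_n<\bar{v}$ branch of the composition, and your inequality $h(v)>1/u$ is guaranteed only where $\sqrt{v^2+\beta\Delta n}\le u$, i.e.\ on $[0,\bar{v}]$. With that rephrasing the fallback is airtight, since $f$ remains continuous and nondecreasing on all of $[0,u]$ and your cobweb argument carries over unchanged.
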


\begin{proof}
For any \( v_a, v_b \) such that \( v_a < v_b \), by the mean value theorem, there exists some \( v_c \in [0, \bar{v}] \) such that:
\begin{equation}
\begin{aligned}
&\frac{f(v_b)-f(v_a)}{v_b-v_a}\\
&= f'(v_c) \\
&\equiv \frac{v_c(1 + \gamma \Delta n)}{\left( \sqrt{v_c^2 + \beta \Delta n} \right)^3 \left( \alpha \Delta n + \frac{1 + \gamma \Delta n}{\sqrt{v_c^2 + \beta \Delta n}} \right)^2 } \\
&= \frac{v_c(1 + \gamma \Delta n)}{ \sqrt{v_c^2 + \beta \Delta n} \left(1 + \gamma \Delta n + \alpha \Delta n \sqrt{v_c^2 + \beta \Delta n} \right)^2 }.
\end{aligned}
\end{equation}
From Equation \eqref{IFS}, we have \( \alpha > 0, \beta > 0, \gamma > 0 \). Therefore, the following inequalities hold:
\begin{align}
& \frac{v_c}{\sqrt{v_c^2 + \beta \Delta n}} = \frac{1}{\sqrt{1 + \frac{\beta \Delta n}{v_c^2}}} < 1, \\
& \frac{1 + \gamma \Delta n}{\left(1 + \gamma \Delta n + \alpha \Delta n \sqrt{v_c^2 + \beta \Delta n} \right)^2} < 1.
\end{align}
Thus,
\begin{align}
\left| \frac{f(v_b) - f(v_a)}{v_b - v_a} \right| = |f'(v_c)| < 1.
\end{align}
This shows that \( f \) is a contraction mapping. Since \( [0, \bar{v}] \) is a complete metric space under the standard absolute value metric, the Banach fixed-point theorem implies that the mapping \( f \) has a unique fixed point:
\begin{align}
f(v^*) = v^*.
\end{align}
\end{proof}

This theorem implies that the second order model captures a key characteristic of the capacity drop phenomenon: the capacity drop stationary state is the most stable, and the system converges to this stable state from any initial condition.

\begin{figure*}[t]
\centering
\begin{minipage}{0.48\textwidth}
  \centering
  \includegraphics[width=\linewidth,clip]{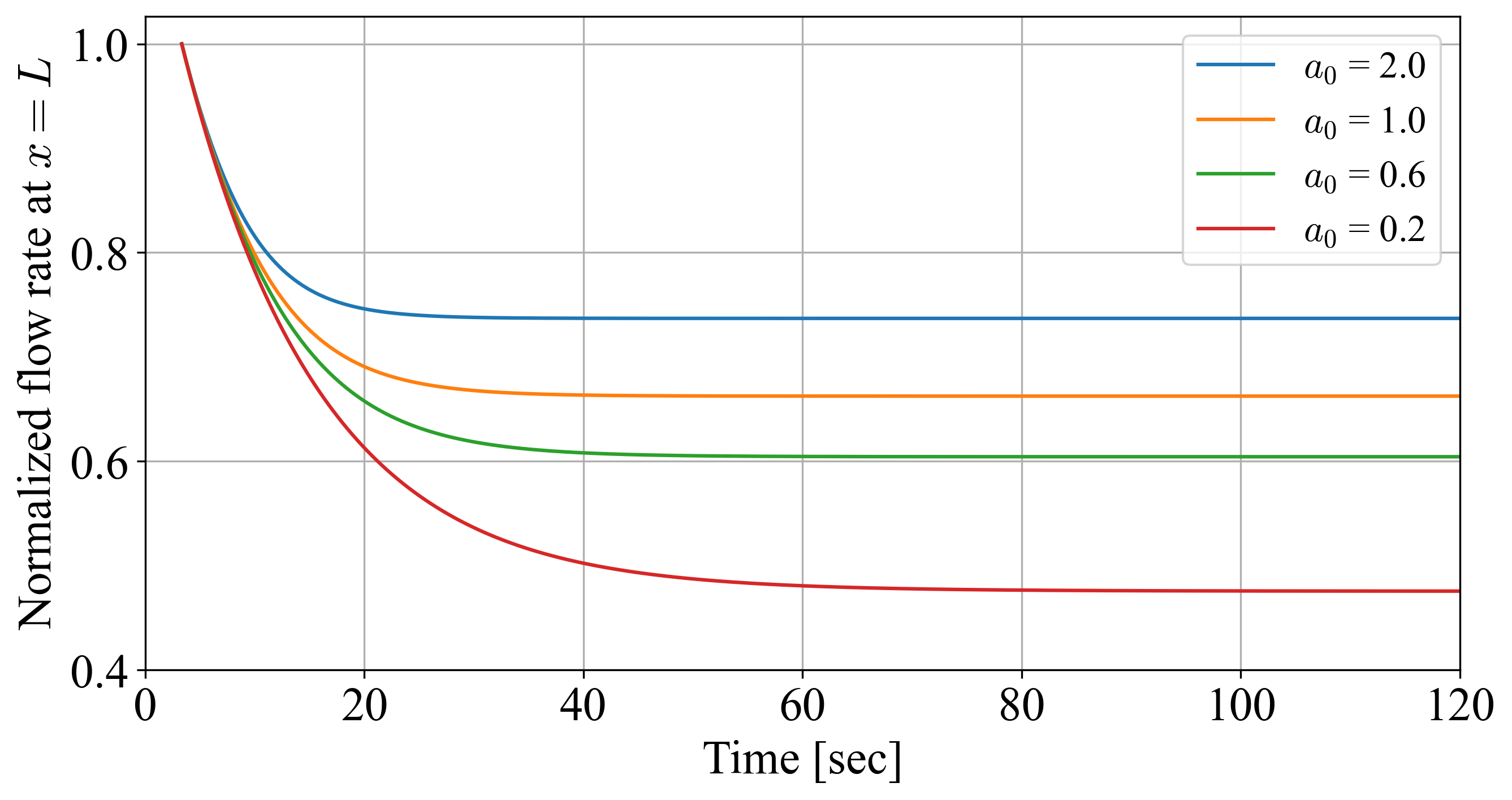}
  \caption{ {Relationship between acceleration and speed of convergence}}
  \label{kaiseki1}
\end{minipage}
\hfill
\begin{minipage}{0.48\textwidth}
  \centering
  \includegraphics[width=\linewidth,clip]{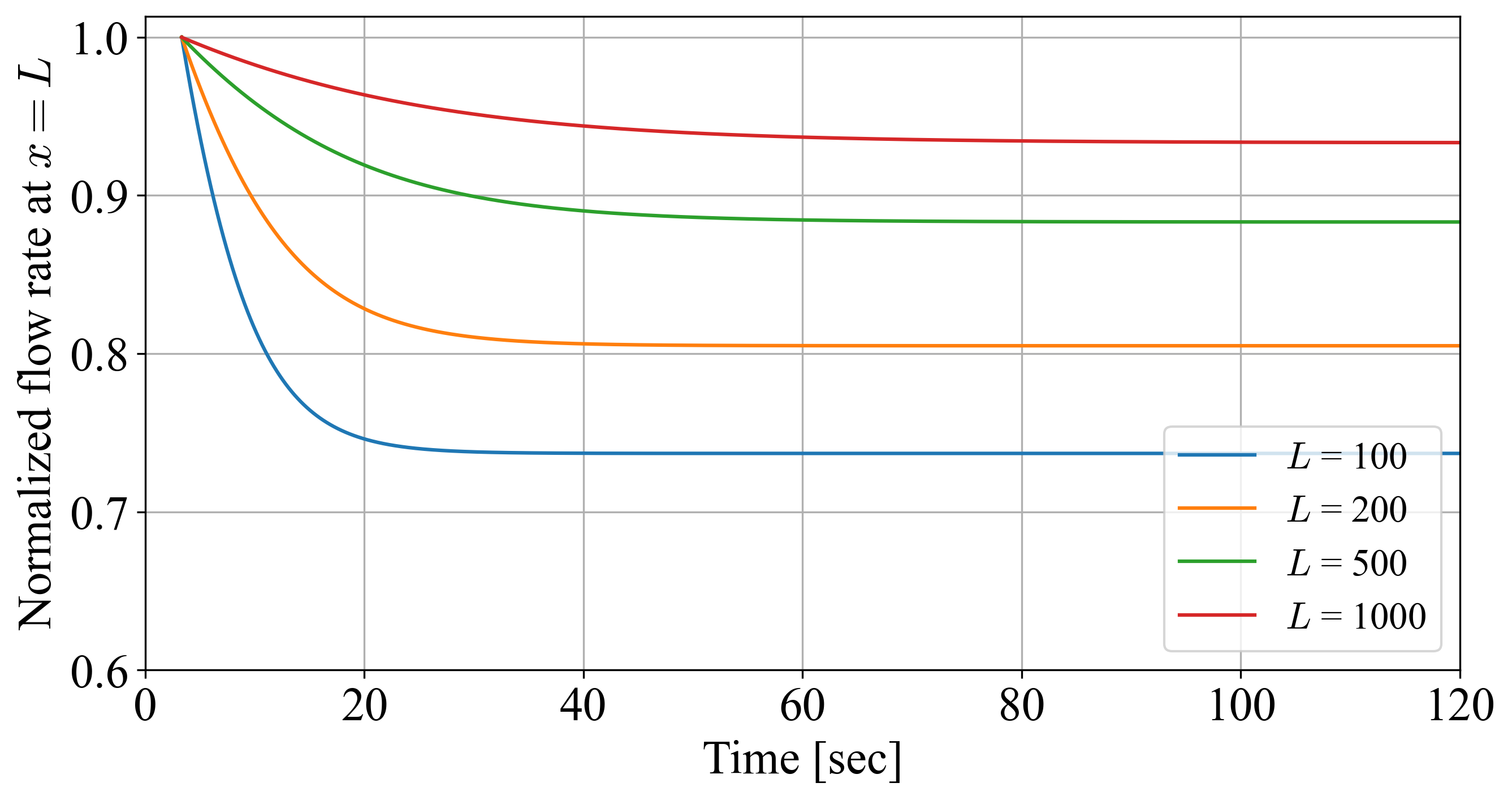}
  \caption{Relationship between bottleneck length and speed of convergence}
  \label{kaiseki2}
\end{minipage}
\end{figure*}

\begin{figure*}[t]
\centering
\begin{minipage}{0.48\textwidth}
  \centering
  \includegraphics[width=\linewidth,clip]{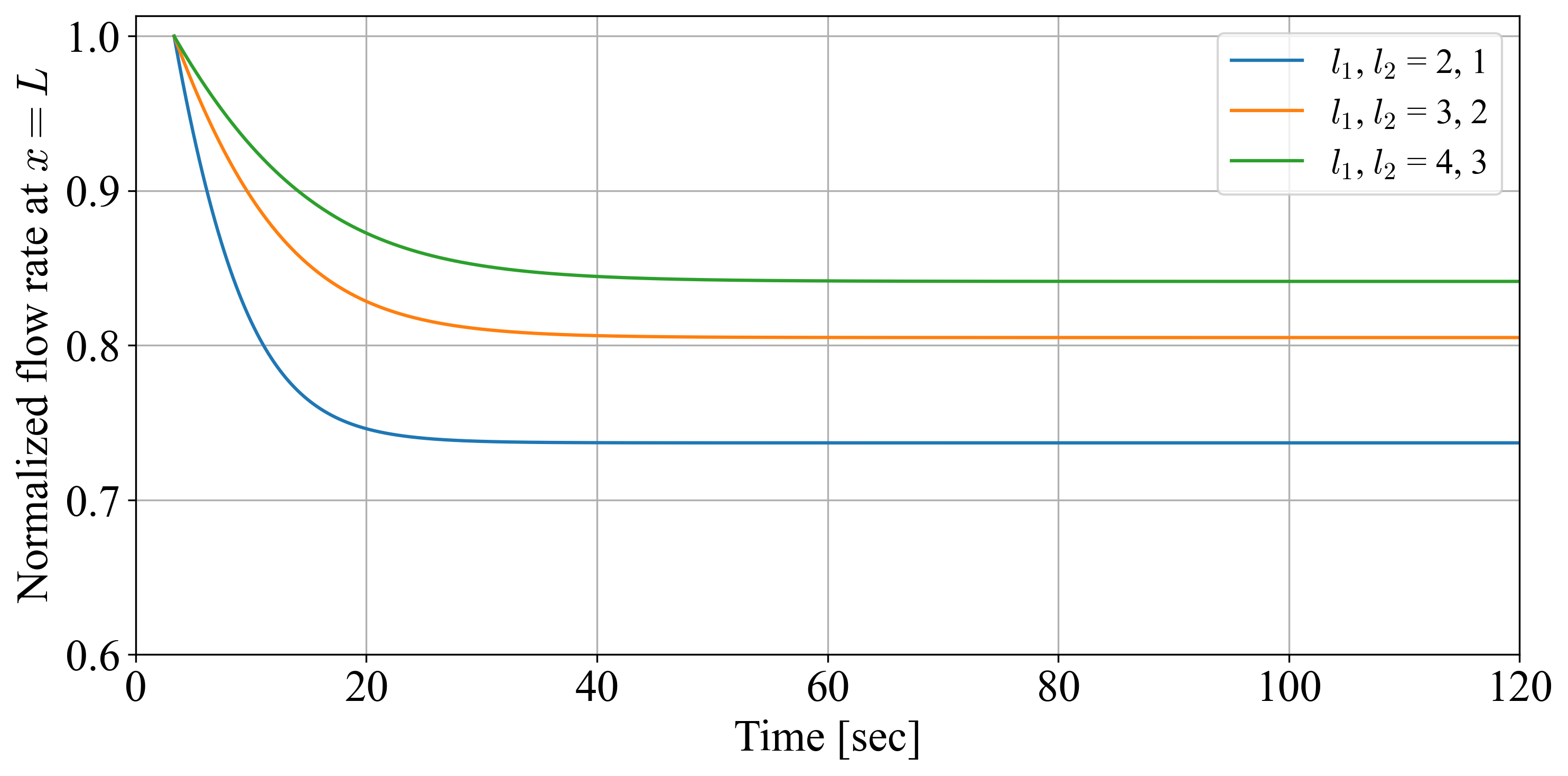}
  \caption{ {Relationship between the number of lanes and speed of convergence}}
  \label{kaiseki3}
\end{minipage}
\hfill
\begin{minipage}{0.48\textwidth}
  \centering
  \includegraphics[width=\linewidth,clip]{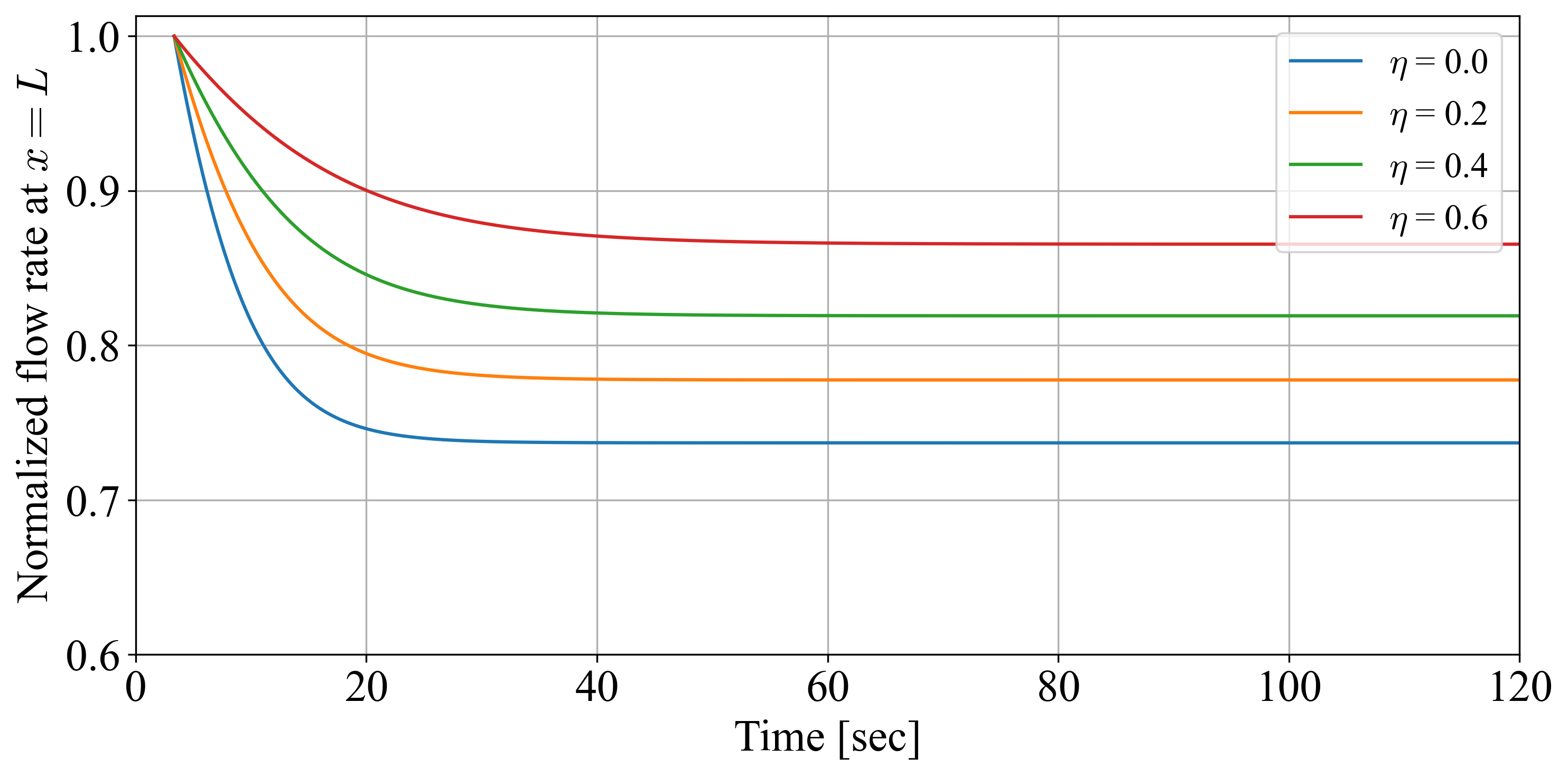}
  \caption{Relationship between lane-changing intensity and speed of convergence}
  \label{kaiseki4}
\end{minipage}\vspace{-3mm}
\end{figure*}

\subsection{Sensitivity Analysis}
In this section, we conduct a sensitivity analysis of the capacity drop phenomenon using the reduced model constructed in Section~\ref{sec:RM}.
Figures~\ref{kaiseki1}, \ref{kaiseki2}, \ref{kaiseki3}, and \ref{kaiseki4} show how the convergence speed of the capacity drop phenomenon and capacity drop ratio vary when changing each of the following parameters: acceleration \( a_0 \), bottleneck length \( L \), number of lanes \( l_1, l_2 \), and lane-changing intensity \( \eta \).
 {
Regarding lane-changing intensity, the number of lanes at the upstream end of the lane-drop section $l_{1}$ is replaced by $\frac{l_{1}}{1 + \eta}$, and the reduced model is computed.
}

From the results of the sensitivity analysis and the theoretical results discussed in Section~\ref{sec:FP}, 
we reveal a novel insight: at lane drop bottlenecks, once congestion occurs, the traffic flow immediately reaches the capacity drop stationary state.
This finding contrasts with that of Wada et al.~\cite{Wada}, who analyzed sag and tunnel bottlenecks and showed that stabilization requires a certain amount of time (e.g., approximately 20 minutes in their analysis based on the Kobotoke Tunnel).
 {
This difference can be attributed to the mechanisms of capacity reduction at the two types of bottlenecks. 
Specifically, at sag and tunnel sections, the bottleneck is represented by a decrease in the backward wave speed of the FD, which in Lagrangian coordinates corresponds to an increase in $\tau(x)$ within the bottleneck section (see Wada et al.~\cite{Wada}). 
In contrast, at lane-drop sections, the bottleneck is modeled by reducing the FD in a self-similar manner due to the decrease in the number of lanes (Fig.~\ref{FD}).
In Lagrangian coordinates, this change implies that not only $\tau(x)$ but also $d(x)$ increases simultaneously. 
This structural difference may account for the difference in stabilizing behavior.
}

More detailed results of the sensitivity analysis are as follows. The basic parameter settings are the same as in Section~\ref{continuum}. The convergence condition in model \eqref{IFS} is defined as $\lvert v_i - v^* \rvert < 10^{-2}$.

In Fig.~\ref{kaiseki1}, the convergence times for accelerations $a_0 =  {2.0},  {1.0},  {0.6},  {0.2}~\mathrm{m/s^2}$ are approximately $ {35.0}~\text{s}$, $ {44.9}~\text{s}$, $ {54.2}~\text{s}$ and $ {82.2}~\text{s}$, respectively. The corresponding capacity drop ratios, given by $(1 - \frac{C^-}{C(L)})$, are approximately $ {0.263}$, $ {0.337}$, $ {0.395}$ and $ {0.524}$.
These results indicate that a smaller acceleration leads to a longer convergence time and a larger capacity drop ratio.

In Fig.~\ref{kaiseki2}, the convergence times for bottleneck lengths $L = 100, 200, 500, 1000~\mathrm{m}$ are approximately 
$ {35.0}\,\text{s}$, $ {51.4}\,\text{s}$, $ {86.0}\,\text{s}$, and $ {124.3}\,\text{s}$, respectively. 
The corresponding capacity drop ratios are 
$ {0.263}$, $ {0.195}$, $ {0.117}$, and $ {0.067}$.
These results show that longer bottleneck lengths result in longer convergence times, but the capacity drop ratio becomes smaller.

In Fig.~\ref{kaiseki3}, the convergence times for lane configurations $(l_1, l_2) = (2,1), (3,2), (4,3)$ are approximately 
$ {35.0}\,\text{s}$, $ {51.4}\,\text{s}$, and $ {64.5}\,\text{s}$, respectively. 
The corresponding capacity drop ratios are 
$ {0.263}$, $ {0.195}$, and $ {0.158}$.
 {
A larger relative reduction in the number of lanes leads to a shorter convergence time but a larger capacity drop ratio.
}

\begin{figure*}[tb]
\centering
\includegraphics[width=0.9\textwidth,clip]{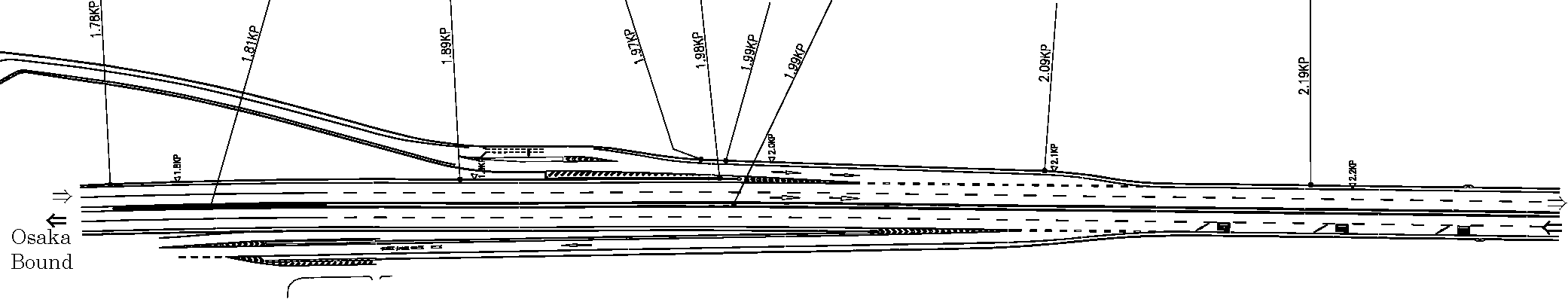}
\caption{Road section overview (from Zen Traffic Data \cite{Zen})}
\label{kukan}
\end{figure*}

\begin{figure*}[tb]
\centering
\includegraphics[width=0.9\textwidth,clip]{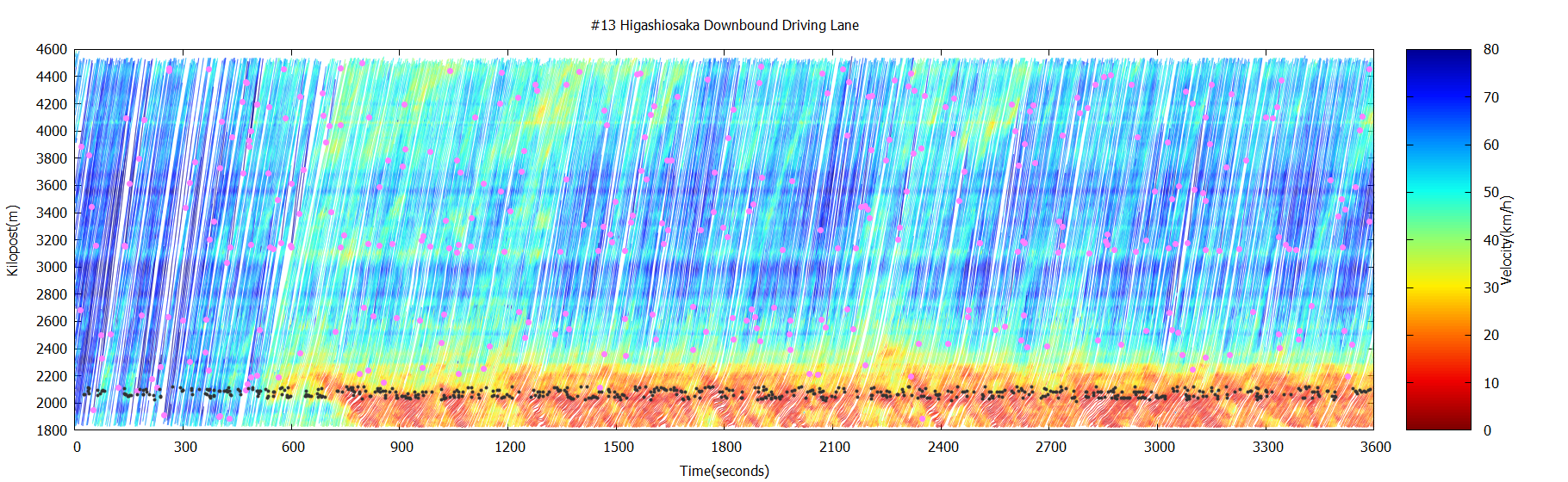}
\caption{Time-Space Diagram (from Zen Traffic Data \cite{Zen})}
\label{ts}
\end{figure*}

\begin{figure}[tb]
\centering
\includegraphics[width=0.48\textwidth,clip]{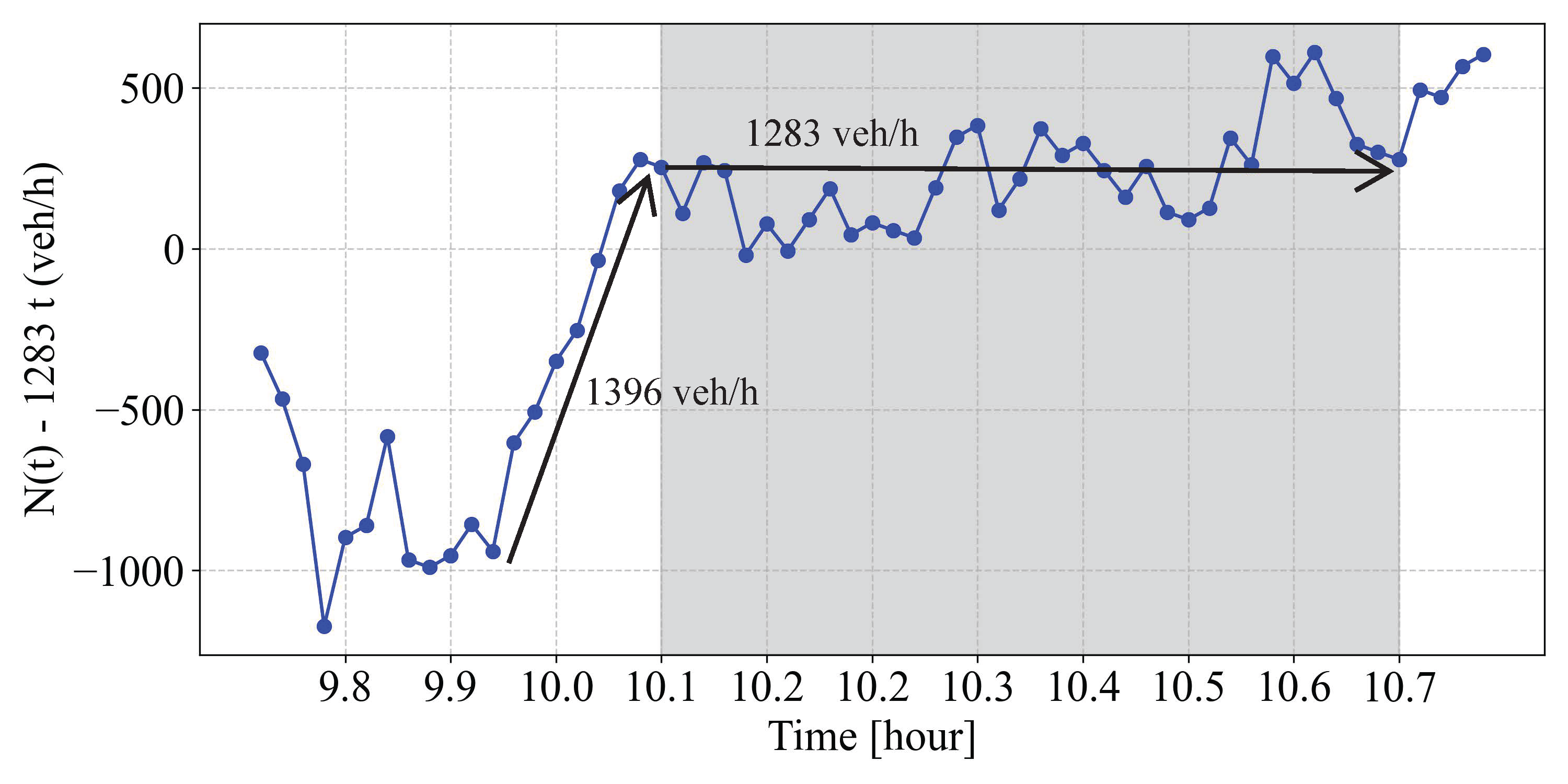}
\caption{Adjusted cumulative flow curve at 2.25 KP with 1283 veh/h baseline removed}
\label{cum}
\end{figure}

In Fig.~\ref{kaiseki4}, the convergence times for different lane-changing intensities $\eta = 0.0, 0.2, 0.4, 0.6$ are approximately 
$ {35.0}\,\text{s}$, $ {43.7}\,\text{s}$, $ {56.0}\,\text{s}$, and $ {75.9}\,\text{s}$, respectively. 
The corresponding capacity drop ratios are 
$ {0.263}$, $ {0.222}$, $ {0.181}$, and $ {0.134}$.
These results indicate that higher lane-changing intensity leads to longer convergence time.
 {
In addition, higher lane-changing intensity results in a smaller capacity drop ratio. This occurs because, in the present sensitivity analysis, the effective number of lanes is reduced only at the upstream end, which reduces the relative reduction in the number of lanes.
}

\section{Calibration and Validation}\label{sec:empirical}
In this section, we calibrate the extended model developed in the previous sections. Based on this calibrated results, we validate whether it can explain real-world congestion phenomena.

\subsection{Validation Framework}

The calibration consists of the following three steps, given the QDF \(q^*\), backward wave speed \( w \), and free-flow speed \( u \):  
(i) determine the bottleneck section,  
(ii) estimate the jam density of all lanes \( l(x)\kappa \) at each location, and  
(iii) estimate the acceleration parameter \( a_0 \).

In step (i), the upstream end of the bottleneck section is first determined by identifying the point at which speed recovery begins, based on the empirical speed recovery profile observed during congestion.
Next, if the recovery profile shows a shape transition from convex to concave, the inflection point—where vehicle behavior transitions from equilibrium congested states to BA states—is defined as the downstream end of the bottleneck section. 
This criterion is based on the characteristics clarified in Section~\ref{continuum}, particularly Figs.~\ref{sim} and~\ref{spdp}.  

In step (ii), the jam density of all lanes \( l(x)\kappa \) within the bottleneck section is estimated from the space-mean speed using the following equation:
\begin{align}
l(x)\kappa = q^* \left( \frac{1}{v^*(x)} + \frac{1}{w} \right).
\label{lxkj}
\end{align}
 {
This estimate can be interpreted as including the  lane-changing intensity $\eta(x)$.
For simplicity, however, we do not explicitly consider it here.
In any case, as shown below, the capacity can be estimated regardless of whether lane-changing intensity is included in the estimate. 
}

Finally, in step (iii), the acceleration parameter \( a_0 \) is estimated using the parameters obtained above:
\begin{align}
a^*(L) = \frac{dl(L)\kappa (q^*)^2 w^3}{\left(l(L)\kappa w - q^* \right)^3} = A(L, v^*(L)).
\end{align}
As the above procedure shows, once the bottleneck section is defined, the jam density \( l(x)\kappa \) and the acceleration parameter \( a_0 \) are uniquely determined in step (ii),(iii). 
Therefore, after estimating \( a_0 \) in step (iii), the model's reproducibility with respect to the observed speed recovery profile is evaluated. 
If necessary, the bottleneck section is adjusted by returning to step (i).

\begin{figure*}[t]
\centering
\includegraphics[width=0.6\textwidth,clip]{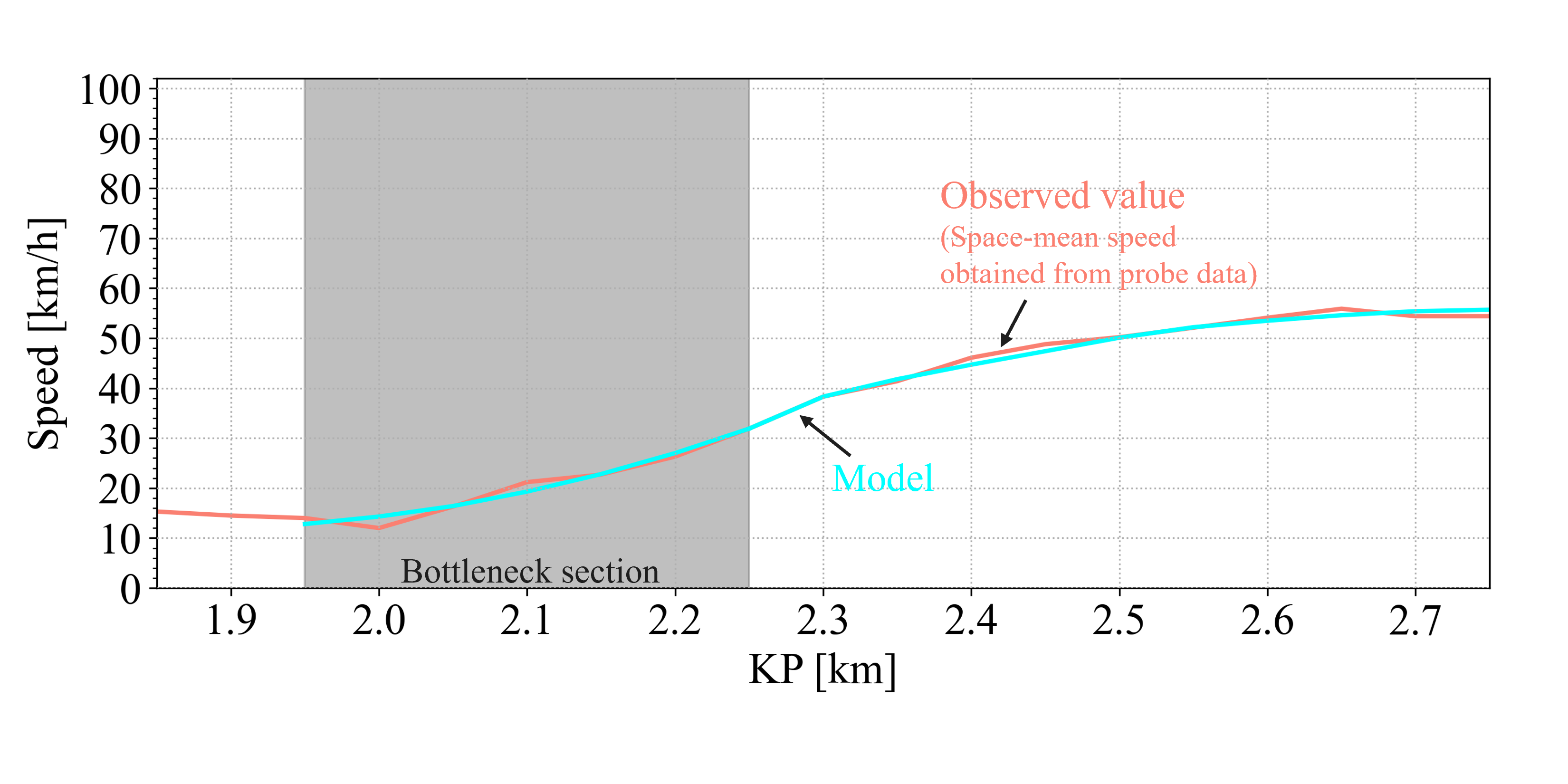}
\vspace{-4mm}
\caption{Speed recovery profile}
\label{SPD}
\end{figure*}

\begin{figure*}[t]
\centering
\includegraphics[width=0.6\textwidth,clip]{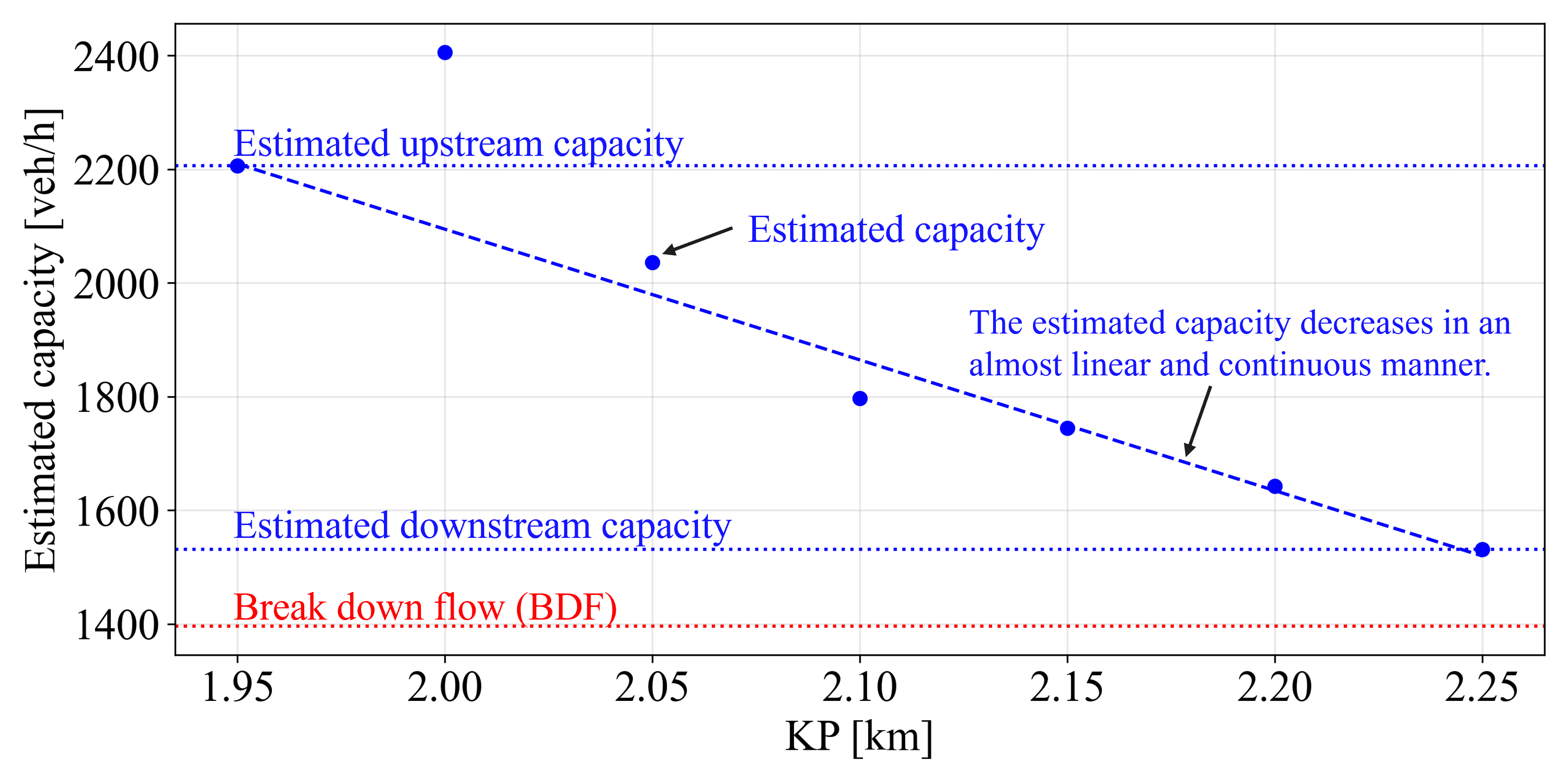}
\caption{Estimated capacity}
\label{estcx}
\end{figure*}

\subsection{Validation Results}

The empirical validation uses the Zen Traffic Data~\cite{Zen} on the mainline lane of the Hanshin Expressway Route 13 Higashi-Osaka Line, near the Morinomiya merging section (merge point: 2.1 KP). 
This is one of the major route passing through the city of Osaka and is one of the busiest routes on the Hanshin Expressway, with a daily traffic volume of 40,000 vehicles.
An overview of the analyzed merging section is shown in Fig.~\ref{kukan}, with the actual merging section located around 2.0--2.1 KP. 
In this study, we consider the merging lane and the leftmost driving lane as the target lanes.
Fig.~\ref{ts} displays the time-space diagram of the driving lane between 9:44 and 10:44 on the observation day. 
Fig.~\ref{cum} is an oblique cumulative plot at 2.25 KP, in which a constant trend of 1283 vehicles per hour has been subtracted to highlight the changes in the cumulative curve.
The shaded region corresponds to the duration of the stationary state.

From Fig.~\ref{ts}, the free-flow speed as \( u = 60~\mathrm{km/h} \), and the backward wave speed is estimated as \( w = 16.8~\mathrm{km/h} \). 
 {
Specifically, for \(u\), we roughly estimate it from the speed range in the free-flow part of Fig.~\ref{ts} (0--300 s), while \(w\) is obtained by averaging the upstream propagation speeds of several deceleration shock waves observed in the congested part (900--3600 s).
}
Furthermore, from Fig.~\ref{cum}, the breakdown flow (BDF) is estimated at \( 1396~\mathrm{veh/h} \), and the QDF at \( 1283~\mathrm{veh/h} \), indicating a capacity drop of approximately 8\%.
The subsequent analysis adopts these values as given.

First, we present the qualitative validation results. 
As shown in Fig.~\ref{cum}, the linear trend of the traffic flow changes rapidly and then quickly stabilizes once congestion occurs.
This observation  {is consistent with} the theoretical result presented in Section \ref{sec:Theory}, which revealed that the traffic flow immediately reaches the capacity drop stationary state after congestion occurs.
The estimated bottleneck section spans from 1.95 KP to 2.25 KP. Given that the actual geometric merging area lies around 2.0--2.1 KP, this estimation appears reasonable.

Next, we present the quantitative validation results using the framework described in the previous section
Fig.~\ref{SPD} compares the observed speed recovery profile with the model output. The model successfully reproduces the empirical speed profile, with a mean squared error of 0.89, indicating good agreement. 
Fig.~\ref{estcx} plots the estimated traffic capacity at each location. The capacity values are computed from the jam density using Eq.~\eqref{Cx} in step (ii). While there is a slight upward deviation at the 2.0 KP location, the overall trend of the estimated capacity is almost linear. 
This suggests consistency between the model assumptions and real-world traffic behavior.

On the other hand, when examining the relationship between the BDF and the estimated capacity, it is observed that the BDF is lower than the estimated bottleneck capacity. This suggests that the occurrence of congestion at the merging section cannot be explained by the simple mechanism of traffic volume exceeding capacity. More specifically, two mechanisms can be considered for congestion occurrence at the merging section. 
First, lane changing vehicle at the merge area plays a role. Since lane changes frequently occur in the merging zone, vehicle interactions between merging and mainline vehicles are likely to cause disruptions. Therefore, congestion may have occurred at a traffic volume lower than the actual capacity.
Second, merging vehicles may act as a moving bottleneck until they accelerate sufficiently after merging. In this case, the actual capacity might have been reduced below the capacity estimated by the theory.
In summary, although the validation results leave room for further discussion regarding the onset of congestion, they suggest that the proposed model has applicability to real-world congestion phenomena.

\section{Conclusion}\label{sec:conclusion}

This study first presented a second order model that endogenously describes the capacity drop phenomenon at lane-drop bottlenecks, following the framework proposed by Jin~\cite{Jin2017}. It was confirmed that the model reproduces the same stationary states as Jin's model and successfully captures the transitional process from congestion onset to the capacity drop stationary state.

Next, we proposed a one dimensional version of the second order model, termed the reduced model, and conducted theoretical analysis and sensitivity studies. The extended model was found to exhibit the following key properties:  
(i) the capacity drop stationary state is the most stable, and the system converges to this state from any initial condition;  
(ii) once congestion occurs, the traffic flow immediately reaches the capacity drop stationary state.  
Property (i) indicates that the theory captures a critical characteristic of capacity drop. 
Property (ii) offers a novel insight, suggesting that the temporal stabilization behavior of capacity drop differs significantly between lane-drop sections and other bottlenecks such as sags or tunnels, where stabilization may take longer.

 {
Finally, the theory was validated using one set of empirical traffic data. The results indicated that the rapid stabilization of the congestion pattern, as suggested by the model, can also be observed in real traffic, and that the estimation of the bottleneck section is reasonable. While this validation provides supportive evidence, it is only a first step toward a more comprehensive understanding of the phenomena based on the proposed theory.
}

Future research will validate the proposed framework across a broader range of lane-drop bottlenecks to assess the generality of the findings. 
We also plan to examine whether the theoretical properties derived here hold universally across different merging bottlenecks. 
Based on these insights, we aim to inform traffic control strategies and roadway design practices to mitigate the adverse effects of capacity drop.

\appendix
\section{ {Derivation of Model \eqref{overallCF}}}\label{sec:derivation}
 {Since the derivation of the speed–spacing FD~\eqref{CF2} from the flow–density FD~\eqref{sec:FD} is straightforward 
and Eq.~\eqref{CF3} holds by definition, only the derivation of Eq.~\eqref{CF} is presented here.}

 {Using new state variables \eqref{vtn}, \eqref{stn} and $a(t, n) = v_t + v v_x$, Eq.~\eqref{eq:velocity} can be rewritten as follows:  
\begin{align}
a(t, n) = \min \left\{ A(v, x), \frac{V(s, x)-v(t, n)}{\epsilon} \right\}\label{atnr}.
\end{align}
By setting $\epsilon = \Delta t$ and using Eq.~\eqref{a} , Eq.~\eqref{atnr} can be further approximated as 
\begin{align}
& \frac{v(t + \Delta t, n)-v(t, n)}{\Delta t} = \min \left\{ A(v, x), \frac{V(s, x)-v(t, n)}{\Delta t} \right\}\notag\\
& \Leftrightarrow \quad v(t + \Delta t, n) = \min\{V(s, x), v(t, n) + A(v, x)\Delta t\}. 
\label{vvtr}
\end{align}
Finally, by substituting Eq.~\eqref{vvtr} into the approximated relation \eqref{v},  $X(t + \Delta t, n) = X(t, n) + v(t + \Delta t, n)\Delta t$, we obtain the continuum model in Lagrangian coordinates \eqref{CF}.
Note that the conservation law (Eq.~\eqref{eq:conservation}) is automatically satisfied under the mild condition (i.e., continuity of $X(t, n)$), and therefore does not need to be explicitly considered.}


\section*{Declarations}

\subsection*{\textbf{Ethics approval and consent to participate}}
Not applicable.

\subsection*{\textbf{Consent for publication}}
Not applicable.

\subsection*{\textbf{Availability of data and materials}}
The data that support the findings of this study are available in Zen Traffic Data at https://zen-traffic-data.net, reference. 
These data are available for research and development organizations to contribute to the development of basic research, technology and services that will make road traffic more safe, secure and comfortable for the next generation.

\subsection*{\textbf{Competing interests}}
The authors declare that they have no competing interests.

\subsection*{\textbf{Funding}}
JSPS Grant-in-aid(KAKENHI) \#23K26218.

\subsection*{\textbf{Author's contributions}}
\textbf{FH:} Methodology, Software, Formal analysis, Validation, Writing - origianl draft, 
\textbf{KW:} Conceptualization, Methodology, Validation, Writing – review \& editing, Supervision, Funding acquisition.

\subsection*{\textbf{Acknowledgements}}
The authors express their gratitude to two anonymous referees for their careful reading of the manuscript and useful suggestions. 
 This work was partially supported by JSPS Grant-in-aid (KAKENHI) \#23K26218. The vehicle trajectory data used in this study was provided by Hanshin Expressway Co.


%
%



\end{document}